\newcommand{\abs}[1]{\left\lvert#1\right\rvert}
\newcommand{\norm}[1]{\left\lVert#1\right\rVert}
\newcommand{\ket}[1]{\left|#1 \right\rangle}
\newcommand{\bra}[1]{\left\langle #1 \right|}
\newcommand{\ketbra}[2]{\left| #1 \rangle \langle #2 \right|}
\theoremstyle{remark}
\newtheorem{theorem}{\indent \emph{\textbf{Theorem}}}
\DeclareMathOperator{\polylog}{polylog}
\begin{document}

\makeatletter
\newcommand{\rmnum}[1]{\romannumeral #1}
\newcommand{\Rmnum}[1]{\expandafter\@slowromancap\romannumeral #1@}
\makeatother

\preprint{APS/123-QED}
\title{Quantum algorithm for visual tracking}

\author{Chao-Hua Yu}
\email{quantum.ych@gmail.com}
\affiliation{State Key Laboratory of Networking and Switching Technology, Beijing University of Posts and Telecommunications, Beijing 100876, China}
\affiliation{Department of Physics, The University of Western Australia, Perth 6009, Australia}

\author{Fei Gao}
\affiliation{State Key Laboratory of Networking and Switching Technology, Beijing University of Posts and Telecommunications, Beijing 100876, China}

\author{Chenghuan Liu}
\affiliation{Department of Computer Science and Software Engineering, The University of Western Australia, Perth 6009, Australia}

\author{Du Huynh}
\affiliation{Department of Computer Science and Software Engineering, The University of Western Australia,
Perth 6009, Australia}

\author{Mark Reynolds}
\affiliation{Department of Computer Science and Software Engineering, The University of Western Australia,
Perth 6009, Australia}

\author{Jingbo Wang}
\email{jingbo.wang@uwa.edu.au}
\affiliation{Department of Physics, The University of Western Australia, Perth 6009, Australia}

\date{\today}

\begin{abstract}
Visual tracking (VT) is the process of locating a moving object of interest in a video. It is a fundamental problem in computer vision, with various applications in human-computer interaction, security and surveillance, robot perception, traffic control, etc. In this paper, we address this problem for the first time in the quantum setting, and present a quantum algorithm for VT based on the framework proposed by Henriques et al. [IEEE Trans. Pattern Anal. Mach. Intell., 7, 583 (2015)]. Our algorithm comprises two phases: training and detection. In the training phase, in order to discriminate the object and background, the algorithm trains a ridge regression classifier in the quantum state form where the optimal fitting parameters of ridge regression are encoded in the amplitudes. In the detection phase, the classifier is then employed to generate a quantum state whose amplitudes encode the responses of all the candidate image patches. The algorithm is shown to be polylogarithmic in scaling, when the image data matrices have low condition numbers, and therefore may achieve exponential speedup over the best classical counterpart. However, only quadratic speedup can be achieved when the algorithm is applied to implement the ultimate task of Henriques's framework, i.e., detecting the object position. We also discuss two other important applications related to VT: (1) object disappearance detection and (2) motion behavior matching, where much more significant speedup over the classical methods can be achieved. This work demonstrates the power of quantum computing in solving computer vision problems.

\begin{description}
\item[PACS numbers]
03.67.Dd, 03.67.Hk
\end{description}
\end{abstract}

\pacs{Valid PACS appear here}

\maketitle

\section{Introduction}

Visual tracking (VT) is the task of locating a moving object of interest in a video.
It is a fundamental problem in computer vision and has wide applications 
\cite{HCMB15,HCMB12,SCCDS14,YSZWS11,MSY10} across human-computer interaction, security and surveillance, robot perception, traffic control, and medical imaging. In recent years, a successful approach for VT, which attracts wide attention, has been tracking-by-detection, where discriminative machine learning classifiers are adopted to detect the object. In this approach, every pair of time contiguous frames respectively undergoes two phases: training and detection.  For convenience, hereafter we call the frame used in the training phase the \textit{training frame}, where the location of the object is determined (given in the initial frame or detected in the following frames), and we call the frame used in the detection phase the \textit{detection frame}, where the location of the object remains to be determined. In the training phase, by using discriminative machine learning algorithms, a number of image patches (samples) around the object are selected from the training frame to train a classifier that can discriminate between the object and its background. In the detection phase, the classifier is then performed on 
several candidate patches selected from the detection frame to calculate their responses.  The maximum response  reveals the most probable
position of the object. The training-detection procedure is run successively over every frame to track the object in the whole video, during which the detection frame becomes the training frame once the object is detected.  The whole process is depicted in Fig.~\ref{fig:TBDVT}. Despite 
various advanced and fast algorithms \cite{HCMB15,HCMB12}, VT can be time consuming when the processed data size is large.

\begin{figure}[b]
\includegraphics[scale=0.382]{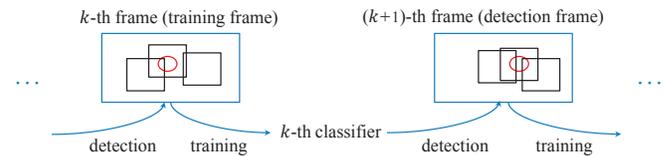}
\caption{\label{fig:TBDVT} Schematic of visual tracking by training and detection.  
The (red) circles represent the object in the training  and detection frame.  The (black) rectangles in the training frame denote sample image patches for training, and those in the detection frame denote the candidate image patches for detection. The training-detection procedure is run successively over contiguous frames to track the object in the video.}
\end{figure}

Quantum computing provides a paradigm that makes use of quantum mechanical principles, such as superposition and entanglement, to perform computing tasks in quantum systems (quantum computers) \cite{QCQI10,QA15}. The most exciting thing about quantum computing is that it can achieve significant speedup over  classical computing, in solving certain problems, such as simulating quantum systems \cite{Qsim17}, factoring large numbers \cite{Shor94}, unstructured database searching \cite{Grover97,AA02} and solving linear systems of equations by Harrow et al. (HHL algorithm) \cite{HHL09}.
In recent years, the applicability 
of quantum computing
has been 
extended to the fields of machine learning and data mining, resulting in a variety of related quantum algorithms \cite{WBL12,SSP16,Wang17,QRR17,QSVM14,YGWW16,LR18} for solving various machine learning and data mining problems, such as linear regression \cite{WBL12,SSP16,Wang17}, support vector machine \cite{QSVM14}, and association rules mining \cite{YGWW16}. These algorithms are shown to achieve significant speedup over their classical counterparts. Overviews on the recent progress in this field can be found in the references \cite{DB17,QML17}.

Motivated by the progress in VT as well as quantum computing,
and especially in machine learning, we explore whether and how quantum computing can be exploited to implement VT
more efficiently than classical computing. In particular, we propose a quantum algorithm for VT based on the well-known tracking-by-detection VT framework proposed by Henriques et al. in 2015 (HCMB15 framework) \cite{HCMB15}. The earlier version of this framework is given in the reference \cite{HCMB12}. In the training phase, the HCMB15 framework utilizes a \emph{base sample patch} to subtly produce a large number of \emph{virtual sample patches} which can be represented by a circulant data matrix, and use these samples to train a ridge regression classifier. In the detection phase, it uses a \textit{base candidate patch} to generate a large number of \emph{virtual candidate patches} which can also be represented by a circulant matrix, performs the classifier on those candidate patches, and obtains their corresponding responses. The most probable location of the object is revealed by the candidate patch with the maximum response. The circulant structure of
the 
data matrix has been cleverly exploited in the HCMB15 framework
and makes it extremely efficient in both training and detection phases.

Our quantum algorithm also comprises the training phase and the detection phase. In the training phase, a ridge regression classifier is trained in the quantum state form, where the optimal fitting parameters of ridge regression are encoded in the amplitudes. In the detection phase, the classifier is then utilized to generate a quantum state whose amplitudes encode the responses of all the candidate image patches. The whole algorithm is built on the proposed subroutine of simulating extended circulant Hamiltonians, which allows the estimation of the singular values of generic circulant matrices by phase estimation. Our algorithm generates both states in polylogarithmic time when the data matrices have low condition numbers, demonstrating its potential to achieve exponential speedup over the classical counterpart. Moreover, we also show that the output quantum state of our algorithm can be fully used to efficiently implement two important computer vision tasks, object disappearance detection and motion behavior matching.

The rest of the paper is organized as follows. In Sec.~\ref{Sec:HCMB15},
we review the HCMB15 framework in terms of its basic concepts, notations,
and classical algorithmic procedures. Sec.~\ref{Sec:QVT} presents and analyzes our quantum algorithm. Sec.~\ref{Sec:Applications} gives two important applications of our algorithm. Finally, conclusions are
drawn in the last section.

\section{HCMB15 framework}
\label{Sec:HCMB15}
In this section, we review the basic idea and algorithms in the HCMB15 framework \cite{HCMB15,HCMB12}. For simplicity, we only consider one-dimensional images with single channel. The generalization to two-dimensional multiple-channel images can be seen in \cite{HCMB15,HCMB12}.

The HCMB15 framework is one of the excellent candidates for implementing VT. In the training phase, it takes a base sample image patch of the training frame (where the object is usually placed at the center) with $n$ pixels that can be represented by a vector $\mathbf{x}=(x_1,x_2,\cdots,x_n)^T$, the size of which is, for example, two times that of the object. By cyclic shifting $\mathbf{x}$, it can be used to generate $n$ virtual samples corresponding to a circulant matrix
\begin{eqnarray}
\label{Eq:CirMatrX}
X=\mathcal{C}(\mathbf{x})=\left[
\begin{array}{ccccc}
x_1     &x_2 &x_3 &\cdots &x_n \\
x_{n}   &x_1 &x_2 &\cdots &x_{n-1}\\
\vdots  &\vdots &\vdots &\ddots &\vdots \\
x_2     &x_3 &x_4 &\cdots &x_1 \\
\end{array}
\right],
\end{eqnarray}
where $\mathcal{C}(\mathbf{x}):\mathbb{R}^n\rightarrow \mathbb{R}^{n\times n}$ is a function which generates the circulant matrix for a given vector $\mathbf{x}$. The $i$-th row of $X$ is denoted by $\mathbf{x}_i$ with $\mathbf{x}_1=\mathbf{x}$, and corresponds to the $i$-th training sample.

In addition, each sample is assigned a \textit{label} (or \textit{regression target}), a positive value ranging from 0 to 1, to quantify the closeness between the sample and the base sample; the value approximates to 1 if the sample is close to the base sample, and reduces to 0 as the distance between them increases. The label of $\mathbf{x}_i$, denoted by $y_i$, is commonly derived using the Gaussian function
\begin{eqnarray}
\label{Eq:TrainOutput}
y_i=e^{-d_i^2/s^2},
\end{eqnarray}
where $d_i$ is the Euclidean distance of the $i$-th sample to the base sample in the image, and $s$ is the bandwidth and is commonly taken as $s=c\sqrt{n}$ for some constant $c$.


In the training phase, the goal is to train a linear function $f(\mathbf{x})=\mathbf{w}^T\mathbf{x}$ by \textit{ridge regression}, which minimizes the squared error over samples $\mathbf{x}_i$ and $y_i$,
\begin{eqnarray}
\label{Eq:TrainOpt}
\min\limits_{\mathbf{w}} \sum_{i=1}^n \abs{f(\mathbf{x}_i)-y_i}^2+\alpha\norm{\mathbf{w}}^2,
\end{eqnarray}
where $\alpha$ is the regularization parameter that controls overfitting. The solution is given by
\begin{eqnarray}
\label{Eq:TrainSolution}
\mathbf{w}=(X^TX+\alpha I)^{-1}X^T\mathbf{y},
\end{eqnarray}
where $\mathbf{y}=(y_1,y_2,\cdots,y_n)^T$ is the vector of the regression targets of all $n$ samples, and $X$ is called the \textit{data matrix} \cite{HCMB15}. After the solution $\mathbf{w}$ is obtained, one can predict the response of a new image patch $\hat{\mathbf{\mathbf{x}}}$ by calculating $f(\hat{\mathbf{\mathbf{x}}})=\mathbf{w}^T\hat{\mathbf{\mathbf{x}}}$.

The HCMB15 framework takes advantage of the property of the circulant matrix $X$ that it can be written as $X=F \textmd{diag}(\mathcal{F}(\mathbf{x}))F^{\dag}$. Here $F$ is the unitary Fourier transformation matrix, $\mathcal{\mathcal{F}}(\mathbf{x})=\sqrt{n}F\mathbf{x}$ is the Discrete Fourier Transformation (DFT), and $\textmd{diag}(\mathbf{z})$ is the diagonal matrix formed by $\mathbf{z}$. As a result, the solution (Eq.~\eqref{Eq:TrainSolution})) can be efficiently obtained with time complexity $O(n\log(n))$ \cite{HCMB15}, which is significantly faster than the
currently
prevalent method by matrix inversion and products that has time complexity $O(n^3)$.

In the detection phase, a base candidate image patch of the detection frame denoted by a $n$-dimensional vector $\mathbf{z}$ is given, and is used to generate $n$ virtual candidate patches corresponding to the $n\times n$ matrix $Z=\mathcal{C}(\mathbf{z})$ with the $i$-th row corresponding to the $i$-th candidate patch. Then the responses of these patches are predicted by
\begin{eqnarray}
\label{Eq:DetectionPred}
\hat{\mathbf{y}}= Z\mathbf{w},
\end{eqnarray}
where the $i$-th element of $\hat{\mathbf{y}}$ corresponds to the response of $i$-th candidate patch. The index with maximum response in $\hat{\mathbf{y}}$ reveals the target image patch that gives the best estimated position of the object and serves as the new base sample for the next training-detection procedure.

\section{Quantum algorithm}
\label{Sec:QVT}
In this section, we present a quantum algorithm for VT based on the HCMB15 framework with focus on one-dimensional single-channel images. Just as the classical HCMB15 framework, our algorithm also comprises two phases: training and detection. In the first place, we propose a technique of simulating the extended circulant Hamiltonians in Sec.~\ref{Subsec:QVT:EXCHS}. By taking this technique as the basic subroutine, we construct our quantum algorithm to prepare the quantum state $\ket{\mathbf{w}}$ (normalized $\mathbf{w}$) in the training phase shown in Sec.~\ref{Subsec:QVT:Training}, and then ultimately generate the state $\ket{\hat{\mathbf{y}}}$ (normalized $\hat{\mathbf{y}}$) in the detection phase shown in Sec.~\ref{Subsec:QVT:Detection}. We perform runtime analysis on our algorithm in Sec.~\ref{Subsec:QVT:Runtime}, and finally extend it to two-dimensional single-channel images in Sec.~\ref{Subsec:QVT:Extension}.

\subsection{Extended circulant Hamiltonian simulation}
\label{Subsec:QVT:EXCHS}

Since Eq.~\eqref{Eq:CirMatrX} describes a general $n \times n$ circulant matrix, we let $X$ denote an arbitrary $n \times n$ circulant matrix. In the case where $X$ is Hermitian, i.e., $X=X^{\dag}$, Zhou and Wang \cite{ZW17} proposed an efficient quantum algorithm that uses the unitary linear decomposition approach \cite{BCCKS15} to implement $e^{-iXt}$ within spectral-norm error $\epsilon$ in time $O(t\polylog(n)\log(t/\epsilon)/\log\log(t/\epsilon))$, under the assumptions that the quantum oracle $O_\mathbf{x}\ket{0}^{\otimes \lceil \log n \rceil}=\sum_{i=1}^n \sqrt{x_i}\ket{i}$ that can be efficiently implemented in time $O(\polylog(n))$ is provided, and $\sum_{i=1}^n  x_i=1$. This algorithm is based on the observation that $X$ can be written as a linear combination of $n$ efficient-to-implement unitary operators, namely
\begin{eqnarray}
X=\sum_{j=1}^n x_jV_{j},
\end{eqnarray}
where each $V_j=\sum_{l=0}^{n-1} \ket{(l-j+1)\mod n}\bra{l}$ for $j=1,2,\cdots,n$, and can be implemented using $O(\log n)$ one- or two-qubit gates.

However, $X$ in general is not Hermitian and can not be simulated directly by the algorithm \cite{ZW17}. To overcome this, we take the \textit{extended circulant Hamiltonian} of $X$,
\begin{eqnarray}
\label{Eq:ExtMatrX}
\tilde{X}&=&\ket{0}\bra{1}\otimes X+\ket{1}\bra{0}\otimes X^{\dag} \nonumber\\
&=& \left[
\begin{array}{cc}
0 & X \\
X^{\dag} &0\\
\end{array}
\right].
\end{eqnarray}
It is interesting and easy to see that $\tilde{X}$ can also be written as a linear combination of simple unitary operators:
\begin{eqnarray}
\label{Eq:LinCombOfXtilde}
\tilde{X}&=&\sum_{j=1}^{n}x_j(\ket{0}\bra{1}\otimes V_j+\ket{1}\bra{0}\otimes V_j^{\dag}) \nonumber\\
&=& \sum_{j=1}^{n}x_j(\sigma_X\otimes I)(\ket{0}\bra{0}\otimes V_j+\ket{1}\bra{1}\otimes V_j^{\dag}), \nonumber\\
&=& \sum_{j=1}^{n}x_j\tilde{V}_j,
\end{eqnarray}
where $\sigma_X$ is the Pauli-X gate (or NOT gate).
Therefore, following Zhou and Wang \cite{ZW17}, we can also use the unitary linear decomposition approach \cite{BCCKS15} to design an efficient quantum algorithm that implements $e^{-i\tilde{X}t}$ within some (spectral-norm) error $\epsilon$. We present the result in the following theorem.

\begin{theorem}
\label{Theorem:ExCirSimulation}
(\textbf{Extended circulant Hamiltonian simulation}) There exists a quantum algorithm that implements $e^{-i\tilde{X}t}$ within error $\epsilon$ by taking $O\left(t\log(t/\epsilon)/\log\log(t/\epsilon)\right)$ calls of controlled-$O_{\mathbf{x}}$ ($\ketbra{0}{0}\otimes I +\ketbra{1}{1}\otimes O_{\mathbf{x}}$), and $O\left(t\log(n)\log(t/\epsilon)/\log\log(t/\epsilon)\right)$ one- or two-qubit gates.
\end{theorem}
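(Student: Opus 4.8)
The plan is to mirror the Zhou–Wang construction \cite{ZW17} for Hermitian circulant matrices, now applied to the extended matrix $\tilde{X}$, and invoke the Berry et al. linear-combination-of-unitaries (LCU) Hamiltonian simulation result \cite{BCCKS15}. The starting point is the decomposition in Eq.~\eqref{Eq:LinCombOfXtilde}, $\tilde{X}=\sum_{j=1}^n x_j \tilde{V}_j$, where each $\tilde{V}_j=(\sigma_X\otimes I)(\ket{0}\bra{0}\otimes V_j+\ket{1}\bra{1}\otimes V_j^\dagger)$ is unitary. First I would verify that $\tilde{V}_j$ is indeed unitary (it is a product of the Pauli-$X$ gate on the flag qubit with a controlled application of the unitaries $V_j$ and $V_j^\dagger$) and that each $\tilde{V}_j$ can be implemented with $O(\log n)$ one- or two-qubit gates, since $V_j$ is a cyclic-shift permutation by $j-1$ positions modulo $n$ and shifts are implementable in $O(\log n)$ gates (e.g.\ by a sequence of controlled increments, or an addition circuit). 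The coefficients $x_j$ are nonnegative and sum to $1$ by the normalization assumption $\sum_i x_i=1$, so $\tilde{X}$ is already written as a convex combination of unitaries with no need to fold signs into the operators.

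Next I would set up the LCU oracles required by \cite{BCCKS15}. The ``PREPARE'' oracle is exactly (a controlled version of) $O_{\mathbf{x}}$: since $O_{\mathbf{x}}\ket{0}^{\otimes\lceil\log n\rceil}=\sum_j \sqrt{x_j}\ket{j}$, applying $O_{\mathbf{x}}$ to the ancilla register produces the amplitude-weighted superposition $\sum_j\sqrt{x_j}\ket{j}$ whose squared amplitudes are the LCU coefficients $x_j$; the factor $\sum_j x_j=1$ means the ``$\lVert\cdot\rVert_1$'' normalization constant of the LCU is $1$. The ``SELECT'' oracle is $\sum_j \ket{j}\bra{j}\otimes \tilde{V}_j$, i.e.\ controlled on the ancilla index $j$ it applies $\tilde{V}_j$; this is implemented with $O(\log n)$ gates per controlled-$\tilde{V}_j$ plus the index arithmetic, and crucially one can realize the whole family of controlled shifts $\{V_j\}_j$ as a single controlled-addition circuit (add $j-1$ mod $n$ to a register), so SELECT costs $O(\mathrm{poly}\log n)$ gates and a constant number of $O_{\mathbf{x}}$-type calls are not needed inside SELECT itself. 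Then I would quote the main theorem of \cite{BCCKS15}: with the $\lVert\cdot\rVert_1$ normalization equal to $1$ (so the effective ``$\lVert H\rVert_1 t$'' parameter is just $t$), simulating $e^{-i\tilde{X}t}$ to error $\epsilon$ requires $O\big(t\log(t/\epsilon)/\log\log(t/\epsilon)\big)$ uses of (controlled-)PREPARE and SELECT, i.e.\ $O\big(t\log(t/\epsilon)/\log\log(t/\epsilon)\big)$ calls to controlled-$O_{\mathbf{x}}$, and multiplying the SELECT gate cost $O(\log n)$ by the number of segments/queries gives the stated $O\big(t\log(n)\log(t/\epsilon)/\log\log(t/\epsilon)\big)$ one- or two-qubit gate count.

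The main obstacle, and the step deserving the most care, is the bookkeeping of the LCU normalization and the verification that simulating $\tilde{X}$ costs the same (up to constants and the extra factor hidden in ``poly log'') as simulating $X$ in \cite{ZW17}: one must confirm that the extension $X\mapsto\tilde{X}$ only doubles the dimension (one extra flag qubit), leaves the coefficient vector $(x_j)$ and hence the $\lVert\cdot\rVert_1=1$ bound unchanged, and adds only $O(1)$ extra gates ($\sigma_X$ plus the controlled structure) per $\tilde{V}_j$. A secondary technical point is establishing $\lVert\tilde{X}\rVert\le\lVert X\rVert\le 1$ (the spectral norm of $\tilde{X}$ equals the largest singular value of $X$, which is $\le \sum_j|x_j|=1$), so that the ``$t$'' appearing in the query bound is genuinely the evolution time and no hidden norm factor degrades the scaling. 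Once these are in place, the result follows directly by citing \cite{BCCKS15}; I would end by remarking that, exactly as in \cite{ZW17}, the $\lceil\log n\rceil$ ancilla qubits prepared by $O_{\mathbf{x}}$ are uncomputed by $O_{\mathbf{x}}^\dagger$ at the end of each LCU segment so that the net action on the system-plus-flag register is $e^{-i\tilde{X}t}$ up to error $\epsilon$.
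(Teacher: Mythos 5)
Your proposal is correct and takes essentially the same route as the paper's own proof, which likewise argues that the Zhou--Wang LCU construction \cite{ZW17,BCCKS15} carries over verbatim once $V_j$ is replaced by $\tilde{V}_j$, each implementable with $O(\log n)$ one- or two-qubit gates; your write-up simply spells out the PREPARE/SELECT bookkeeping and the $\lVert\cdot\rVert_1=1$ normalization that the paper leaves implicit by citation.
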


\begin{proof}
This theorem can be readily proved by following the same argument as proving theorem 4.1 in \cite{ZW17}. In that proof, a quantum algorithm which involves a series of controlled-$O_{\mathbf{x}}$ together with controlled $V_j$ and their inverse was constructed. To prove the above theorem, we can construct a similar quantum algorithm where $\tilde{V}_j$ instead of $V_j$ are used. It is easy to see from Eq.~\eqref{Eq:LinCombOfXtilde} that $\tilde{V}_j$ can also be efficiently implemented by taking $O(\log n)$ one- or two-qubit gates just as $V_j$.
\end{proof}

According to this theorem,  $e^{-i\tilde{X}t}$ can be efficiently implemented within error $\epsilon$ with time complexity $O\left(t\polylog(n)\log(t/\epsilon)/\log\log(t/\epsilon)\right)$, under the assumption that $O_{\mathbf{x}}$ can be implemented in time $O(\polylog(n))$. This efficient implementation of $e^{-i\tilde{X}t}$ for an arbitrary circulant matrix $X$ will serve as an elementary subroutine in both phases of our algorithm as shown in the following.

\subsection{Algorithm phase \Rmnum{1}: Training}
\label{Subsec:QVT:Training}

The training phase aims to produce the quantum state $|\mathbf{w}\rangle$. To attain a more concise form of $\mathbf{w}$ (as well as $|\mathbf{w}\rangle$), we write $X$ in the singular value decomposition form
\begin{eqnarray*}
X=\sum_{j=1}^n\lambda_j\ket{\mathbf{u}_j}\bra{\mathbf{v}_j},
\end{eqnarray*}
where $\{\lambda_j\}_{j=1}^n$, $\{\ket{\mathbf{u}_j}\}_{j=1}^n$, and $\{\ket{\mathbf{v}_j}\}_{j=1}^n$ are respectively the singular values, the left singular vectors, and the right singular vectors of $X$. So, according to the definition of Eq.~\eqref{Eq:ExtMatrX}, $\tilde{X}$ has $2n$ eigenvalues $\{\pm \lambda_j\}_{j=1}^{n}$ and eigenvectors $\{\ket{\mathbf{w}_j^{\pm}}\}_{j=1}^{n}$, where
\begin{eqnarray*}
\ket{\mathbf{w}_j^{\pm}}=(\ket{0}\ket{\mathbf{u}_j}\pm\ket{1}\ket{\mathbf{v}_j})/\sqrt{2}.
 \end{eqnarray*}
Since $\{\ket{\mathbf{u}_j}\}_{j=1}^n$ constitutes an orthonormal basis of the $\mathbb{R}^n$ space, $\mathbf{y}$ can be written as a linear combination of these basis vectors, i.e., $\mathbf{y}=\sum_{j=1}^n\beta_j\norm{\mathbf{y}}\ket{\mathbf{u}_j}$. In this case, the ridge regression solution $\mathbf{w}$ (Eq.~\eqref{Eq:TrainSolution}) can be rewritten as
\begin{eqnarray}
\label{Eq:TrainSolutionSVD}
\mathbf{w}=\sum_{j=1}^n\frac{\beta_j\lambda_j\norm{\mathbf{y}}}{\lambda_j^2+\alpha}\ket{\mathbf{v}_j}.
\end{eqnarray}
It is notable that a good $\alpha$, with which ridge regression can achieve good predictive performance, can be chosen efficiently by quantum cross validation \cite{QRR17}.


The training phase of our algorithm to generate $\ket{\mathbf{w}}$ (normalized Eq.~\eqref{Eq:TrainSolutionSVD}) is detailed as below, and the corresponding quantum circuit is shown in Fig.~\ref{fig:QCircuit1}.

\begin{figure}[htb]
\includegraphics[scale=0.4]{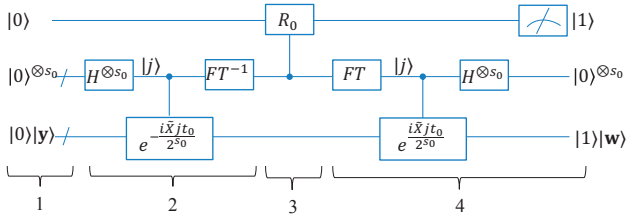}
\caption{\label{fig:QCircuit1} Quantum circuit for the training phase. Here the numbers 1, 2, 3 and 4 denote the sequence of four steps of the training phase, '/' denotes a bundle of wires, $H$ denotes the Hadamard operation, $FT$ represents the quantum Fourier transformation \cite{QCQI10}, and controlled-$R_0$ denotes the controlled rotation in the training step 3.}
\end{figure}

1. Prepare three quantum registers in the state $(\ket{0}\ket{\mathbf{y}})(\ket{0}^{\otimes s_0})\ket{0}$, where
\begin{eqnarray}
\ket{0}\ket{\mathbf{y}}\nonumber
=&&\ket{0}(\sum_{j=1}^n\beta_j\ket{\mathbf{u}_j}) \nonumber\\
=&& \sum_{j=1}^n\beta_j\left(\frac{\ket{\mathbf{w}_j^{+}}+\ket{\mathbf{w}_j^{-}}}{\sqrt{2}}\right),
\end{eqnarray}
where $s_0$ denotes the number of qubits used for storing eigenvalues in the next step.
Consequently, the regression targets $(y_1,y_2,\cdots,y_n)$ of all the samples are encoded in the amplitudes of $\ket{\mathbf{y}}$. The details of preparing $\ket{\mathbf{y}}$ are shown in the Appendix. 

2. Performing phase estimation of $e^{-i\tilde{X}t_0}$ on the first two registers, we obtain the whole state
\begin{eqnarray}
\sum_{j=1}^n \beta_j\left(\frac{\ket{\mathbf{w}_j^{+}}\ket{\lambda_j}+\ket{\mathbf{w}_j^{-}}\ket{-\lambda_j}}{\sqrt{2}}\right)\ket{0}.
\end{eqnarray}

3. Performing a controlled rotation on the last register (qubit) conditioned on the eigenvalue register, we have
\begin{eqnarray}
&&\sum_{j=1}^n \beta_j\Bigg(\frac{\ket{\mathbf{w}_j^{+}}\ket{\lambda_j}\left(\frac{C\lambda_j}{\lambda_j^2+\alpha}\ket{1}+\sqrt{1-(\frac{C\lambda_j}{\lambda_j^2+\alpha})^2}\ket{0}\right)}{\sqrt{2}} \nonumber\\
&&+\frac{\ket{\mathbf{w}_j^{-}}\ket{-\lambda_j}\left(\frac{-C\lambda_j}{\lambda_j^2+\alpha}\ket{1}+\sqrt{1-(-\frac{C\lambda_j}{\lambda_j^2+\alpha})^2}\ket{0}\right)}{\sqrt{2}} \Bigg).
\end{eqnarray}
Here $C=O(\min_{j}\lambda_j)$.

In other words, the last qubit is rotated by the angle
$$\theta(\lambda)=\arcsin\left(\frac{C\lambda}{\lambda^2+\alpha}\right)$$
conditioned on $\ket{\lambda}$, where $\lambda$ is the eigenvalue. To implement this, we use the methodology presented in \cite{CD16} to construct the register $\ket{\theta(\lambda)}$ from $\ket{\lambda}$, where $\theta(\lambda)$ is approximated by truncating the Taylor series of $\theta(\lambda)$ (of $\lambda$) to some order, and $\ket{\theta(\lambda)}$ is accordingly obtained by performing a sequence of quantum multiplication and quantum addition operations as described in \cite{CD16}.

4. Undoing phase estimation and measuring the last qubit to see the outcome $\ket{1}$, we obtain the state of the first register
\begin{eqnarray}
&&\ket{1}\left(\sum_{j=1}^n\frac{C\beta_j\lambda_j}{\lambda_j^2+\alpha}\ket{\mathbf{v}_j}/\sqrt{\sum_{j=1}^n \left(\frac{C\beta_j\lambda_j}{\lambda_j^2+\alpha}\right)^2}\right) \nonumber\\
=&&\ket{1}\ket{\mathbf{w}}.
\end{eqnarray}
Discarding $\ket{1}$, we derive the quantum state $\ket{\mathbf{w}}$ as desired.

\subsection{Algorithm phase \Rmnum{2}: Detection}
\label{Subsec:QVT:Detection}

Obtaining $\ket{\mathbf{w}}$ via the training phase of our algorithm allows us to proceed to the next phase: detection. In the detection phase, according to Eq.~\eqref{Eq:DetectionPred}, we are to produce the final state $\ket{\hat{\mathbf{y}}}$ by performing the operation $Z$ on $\ket{\mathbf{w}}$. To see $\ket{\hat{\mathbf{y}}}$ more concisely, let $Z$ be written in the singular value decomposition form, $Z=\sum_{j=1}^n\gamma_j\ket{\mathbf{p}_j}\bra{\mathbf{q}_j}$, where $\{\gamma_j\}_{j=1}^n$, $\{\ket{\mathbf{p}_j}\}_{j=1}^n$ and $\{\ket{\mathbf{q}_j}\}_{j=1}^n$ are respectively the singular values, the left singular vectors, and the right singular vectors of $Z$. Since $\ket{\mathbf{w}}$ lies in the space spanned by $\{\ket{\mathbf{q}_j}\}_{j=1}^n$, $\ket{\mathbf{w}}$ can be written as a linear combination of them, i.e., $\ket{\mathbf{w}}=\sum_{j=1}^n\delta_j \ket{\mathbf{q}_j}$. So $\ket{\hat{\mathbf{y}}}$ can be rewritten as
\begin{eqnarray}
\ket{\hat{\mathbf{y}}}=\sum_{j=1}^n\gamma_j\delta_j \ket{\mathbf{p}_j}/\sqrt{\sum_{j=1}^n\gamma_j^2\delta_j^2}.
\end{eqnarray}

Similar to the training phase, this phase resorts to the simulation of the extended circulant Hamiltonian of $Z$, $\tilde{Z}=\ket{0}\bra{1}\otimes Z+\ket{1}\bra{0}\otimes Z^{\dag}$, which has $2n$ eigenvalues $\{\pm \gamma_j\}_{j=1}^{n}$ and eigenvectors $\{\ket{\mathbf{r}_j^{\pm}}=(\ket{0}\ket{\mathbf{p}_j}\pm\ket{1}\ket{\mathbf{q}_j})/\sqrt{2}\}_{j=1}^{n}$. Here we also assume that a quantum oracle $O_{\mathbf{z}}$ that efficiently implements $\ket{0}^{\log \lceil n \rceil}\mapsto \sum_{i=1}^n \sqrt{z_i}\ket{i}$ in time $O(\polylog(n))$ and are provided in the controlled fashion, and that $\sum_{i=1}^n z_i=1$. Under these assumptions, we are able to efficiently implement $e^{-i\tilde{Z}t}$ by Theorem 1. Armed with this ability, the detection phase of our algorithm to generate $\ket{\hat{\mathbf{y}}}$ can be achieved by the following steps, and the corresponding quantum circuit is shown in Fig.~\ref{fig:QCircuit2}.

1. Prepare three quantum registers in the state $(\ket{1}\ket{\mathbf{w}})(\ket{0}^{\otimes s_1})\ket{0}$, where
\begin{eqnarray}
\ket{1}\ket{\mathbf{w}}
&=&\ket{1}(\sum_{j=1}^n\delta_j\ket{\mathbf{q}_j}) \nonumber\\
&=& \sum_{j=1}^n\delta_j\left(\frac{\ket{\mathbf{r}_j^{+}}-\ket{\mathbf{r}_j^{-}}}{\sqrt{2}}\right),
\end{eqnarray}
and $s_1$ denotes the number of qubits for phase estimation in the following step. Note that $\ket{\mathbf{w}}$ has been produced as shown in the training phase.

2. Performing phase estimation of $e^{-i\tilde{Z}t_1}$ on the first two registers, we have
\begin{eqnarray}
\sum_{j=1}^n\delta_j\left(\frac{\ket{\mathbf{r}_j^{+}}\ket{\gamma_j}-\ket{\mathbf{r}_j^{-}}\ket{-\gamma_j}}{\sqrt{2}}\right)\ket{0}.
\end{eqnarray}

3. Similar to the third step of the training phase, we also perform a controlled rotation operation on the last two registers to obtain the whole state
\begin{eqnarray}
&&\sum_{j=1}^n \delta_j\bigg(\frac{\ket{\mathbf{r}_j^{+}}\ket{\gamma_j}(C'\gamma_j\ket{1}+\sqrt{1-(C'\gamma_j)^2}\ket{0})}{\sqrt{2}} \nonumber\\
&&-\frac{\ket{\mathbf{r}_j^{-}}\ket{-\gamma_j}(-C'\gamma_j\ket{1}+\sqrt{1-(-C'\gamma_j)^2}\ket{0})}{\sqrt{2}} \bigg).
\end{eqnarray}
Here $C'=O(\max_{j}\gamma_j)^{-1}$.

The controlled rotation can be implemented by the same approach presented in step 3 of the training phase, but the function of angle $\theta(\lambda)$ is replaced by
$\theta(\gamma)=\arcsin(C'\gamma)$, where $\gamma$ represents the eigenvalue stored in the second (eigenvalue) register.

4. Undo the phase estimation and measure the last qubit to see the outcome $\ket{1}$. If successful, we obtain the state of the first register,
\begin{eqnarray}
&&\ket{0}\left(\sum_{j=1}^nC'\delta_j\gamma_j\ket{\mathbf{p}_j}/\sqrt{\sum_{j=1}^n \left(C'\delta_j\gamma_j\right)^2}\right) \nonumber\\
=&&\ket{0}\ket{\hat{\mathbf{y}}}.
\end{eqnarray}
Then, discarding $\ket{0}$, we derive the quantum state $\ket{\hat{\mathbf{y}}}$ as desired.

\begin{figure}[htb]
\includegraphics[scale=0.4]{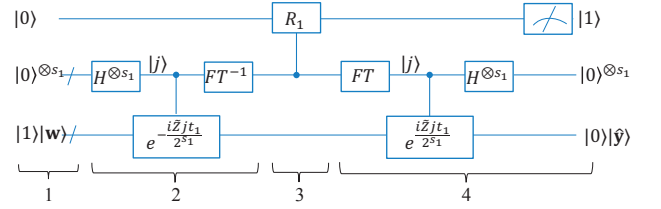}
\caption{\label{fig:QCircuit2} Quantum circuit for generating $\ket{\hat{\mathbf{y}}}$. Here the numbers 1, 2, 3 and 4 denote the four sequential steps of the detection phase, and controlled-$R_1$ denotes the controlled rotation in the detection step 3.}
\end{figure}

\subsection{Runtime analysis}
\label{Subsec:QVT:Runtime}
We respectively analyze the runtime in the training phase and that in the detection phase of our algorithm, and discuss the overall runtime.

In the training phase, the error occurs in the Hamiltonian simulation of $\tilde{X}$ and in the phase estimation in step 2. Since the complexity of Hamiltonian simulation scales sublogarithmically in the inverse of error, as shown in Theorem \ref{Theorem:ExCirSimulation}, and that of the phase estimation scales linearly \cite{QCQI10}, the source of error dominantly comes from the phase estimation. The phase estimation induces error $O(1/t_0)$ in estimating $\lambda$ (singular value of $X$), and relative error
$$O\left(\frac{\lambda^2-\alpha}{t_0\lambda(\lambda ^2+\alpha)}\right)=O(1/t_0\lambda)$$
in estimating $\lambda/(\lambda^2+\alpha)$. Since $\sum_{i=1}^n x_i=1$, the spectral norm of $X$ is 1 and thus $1/\kappa_X \leq \lambda \leq 1$, where $\kappa_X$ is the condition number of $X$. So $t_0=O(\kappa_X/\epsilon)$ induces final error $\epsilon$ according to the analysis in HHL algorithm \cite{HHL09}.
In step 3, $\theta(\lambda)$ can be approximated within error $\epsilon$ by truncating its Taylor series to the order $O\left(\frac{\log(1/\epsilon)}{\log\log(1/\epsilon)}\right)$, meaning that $O\left(\frac{\log(1/\epsilon)}{\log\log(1/\epsilon)}\right)$ iterations of performing quantum multiplication and quantum addition operations are required \cite{CD16}. Since the eigenvalue register $\ket{\lambda_j}$ (and other ancilla registers storing intermediate results) should be of $O\left(\log(\kappa_X/\epsilon)\right)$ qubits to ensure $\lambda_j$ being estimated within error $O(\epsilon/\kappa_X)$ via phase estimation (in step 2)\cite{QCQI10}, each iteration takes $O\left(\log^2(\kappa_X/\epsilon)\right)$ and $O\left(\log(\kappa_X/\epsilon)\right)$ elementary gates to implement quantum multiplication and addition, respectively. So step 3 totally takes time $O\left(\frac{\log(1/\epsilon)\log^2(\kappa_X/\epsilon)}{\log\log(1/\epsilon)}\right)$, which is relatively negligible compared to the time taken in step 2.
In step 4, the success probability of obtaining
$\ket{1}$ is $$\sum_j \left(\frac{C\beta_j\lambda_j}{\lambda_j^2+\alpha}\right)^2=\Omega(1/\kappa_X^2)$$
since $C\lambda_j/(\lambda_j^2+\alpha)=\Omega(1/\kappa_X)$, which means $O(\kappa_X^2)$ measurements are required to obtain $\ket{1}$ with a high probability and this can be improved to $O(\kappa_X)$ repetitions by amplitude amplification. Here $\alpha$ is taken for convenience to be in the range $[1/\kappa_X^2,1]$, because from Eq.~\eqref{Eq:TrainSolution} it is easy to see that when $\alpha$ is too small, ridge regression is reduced to ordinary linear regression, and when $\alpha$ is too large, it is the operation $X$ that is performed in Eq.~\eqref{Eq:TrainSolution}. Therefore, the total time complexity for generating $\ket{\mathbf{w}}$ in the training phase is $T_\mathbf{w}=\tilde{O}(\polylog(n) \kappa_X^2/\epsilon)$, where $\tilde{O}$ is used to suppress the lower growing terms, the polylogarithmic factors in simulating Hamiltonian as shown in Theorem \ref{Theorem:ExCirSimulation}, and in step 3.

In the detection phase, the source of error for generating $\ket{\hat{\mathbf{y}}}$ is also dominated by the phase estimation in step 2. The phase estimation induces relative error $O(1/\gamma)$ in estimating $\gamma$, where $\gamma$ denotes the singular value of $Z$. Since $\sum_{i} z_i=1$, the spectral norm of $Z$ is also one and $1/\kappa_Z \leq \gamma \leq 1$, where $\kappa_Z$ is the condition number of $Z$. Thus $t_1=O(\kappa_Z/\epsilon)$ induces final error $\epsilon$ for generating $\ket{\hat{\mathbf{y}}}$. Just as the step 3 of the training phase, the step 3 of the detection phase also takes relatively negligible time. In step 4, the success probability of obtaining the outcome $\ket{1}$ is $\sum_{j=1}^n \left(C'\delta_j\gamma_j\right)^2=\Omega(1/\kappa_Z^2)$, which means $O(\kappa_Z)$ repetitions are required to obtain $\ket{1}$, with a high probability, by amplitude amplification. Taking into account the time for generating $\ket{\mathbf{w}}$ (in the training phase) in step 1, the runtime of the detection phase of our algorithm scales as $\tilde{O}\left(\kappa_Z(\polylog(n)\kappa_Z/\epsilon+T_\mathbf{w})\right)=\tilde{O}\left(\polylog(n)\kappa_Z(\kappa_Z+\kappa_X^2)/\epsilon\right)$.

The overall time complexity of our algorithm has the same scaling as that of the detection phase, because the training phase to create $\ket{\mathbf{w}}$ has been incorporated into the detection phase in the sense that $\ket{\mathbf{w}}$ is taken as the input in the detection phase (as shown in its first step).
This means that, compared to the classical HCMB15 framework which takes same $O(n\log(n))$ time to obtain both $\mathbf{w}$ and $\hat{\mathbf{y}}$, our algorithm takes only exponentially less time for generating their quantum-state versions $\ket{\mathbf{w}}$ and $\ket{\hat{\mathbf{y}}}$, when $\kappa_X, \kappa_Z, 1/\epsilon=O(\polylog(n))$. Comparisons between the classical HCMB15 framework and our quantum algorithm are detailed in Table \ref{tab:AlgorithmComparison}.

\onecolumngrid

\begin{table}[htb]
\caption{\label{tab:AlgorithmComparison}%
Comparisons between our quantum algorithm and the classical HCMB15 algorithm, in terms of their inputs/outputs and their time complexity.}
\resizebox{\textwidth}{!}{
\begin{ruledtabular}
\begin{tabular}{cccccc}
Algorithms &Inputs &Outputs & Time complexity \\
\hline
 Classical algorithm &$\mathbf{x}$,$\mathbf{\mathbf{z}}$, $\mathbf{y}$, $\alpha$ &$\hat{\mathbf{y}}$ &$O(n\log(n))$ \\
 Quantum   algorithm &Controlled-$O_\mathbf{x}$, Controlled-$O_\mathbf{\mathbf{z}}$, $\ket{\mathbf{y}}$, $\alpha$ &$\ket{\hat{\mathbf{y}}}$ &$\tilde{O}\left(\polylog(n)\kappa_Z(\kappa_Z+\kappa_X^2)/\epsilon\right)$ \\
\end{tabular}
\end{ruledtabular}}
\end{table}
\twocolumngrid

Further runtime needs to be considered if we use the output $\ket{\hat{\mathbf{y}}}$ to implement other specific tasks. For example, let us consider the ultimate task of the HCMB15 framework to detect target candidate patch, which entails identifying the index of largest squared amplitude (denoted by $p_{max}$) in $\ket{\hat{\mathbf{y}}}$ by sampling the state. Then we require $O(1/p_{max})$ samples to reveal $p_{max}$, which will be efficient if $1/p_m=O(\polylog(n))$. However, in practice, when the object is clearly visible in the detection frame, the set of elements in $\hat{\mathbf{y}}$ will be approximately equal to that in $\mathbf{y}$ \cite{PC}. In this case the largest squared amplitude in $\ket{\hat{\mathbf{y}}}$ is close to that in $\ket{\mathbf{y}}$, that is, $p_{max}=1/(\sum_{i=1}^n y_i^2)\approx \frac{2\sqrt{2}}{\sqrt{\pi}s}=O(1/\sqrt{n})$ according to the result of the Appendix. This means that $\Omega(\sqrt{n})$ copies of $\ket{\hat{\mathbf{y}}}$ are required to reveal the largest squared amplitude in $\ket{\hat{\mathbf{y}}}$.
As a result, the overall runtime for this task would be $\Omega\left(\sqrt{n}\polylog(n)\kappa_Z(\kappa_Z+\kappa_X^2)/\epsilon\right)$ by our quantum algorithm, achieving quadratic improvement at most over the classical HCMB15 framework. Hereafter we focus on using $\ket{\hat{\mathbf{y}}}$ to implement other interesting tasks related to VT, which can achieve much more significant speedup over their classical methods, as shown in the following Sec. \ref{Sec:Applications}, rather than sampling it exhaustively.

\subsection{Extension to two-dimensional images}
\label{Subsec:QVT:Extension}

For two-dimensional images, a base sample patch of size $n\times m$ is represented by a $n\times m$ matrix $\mathbf{x}$ with the $j$th ($1 \leq j\leq m$) row denoted by a $n$-dimensional vector $\mathbf{x}_j$. All the samples correspond to cyclic shifts of the base sample in both horizontal and vertical directions, and can be described by a block circulant matrix with circulant blocks \cite{HCMB12}, resulting in the $nm\times nm$ data matrix
\begin{eqnarray}
\label{Eq:BloCirMatrX}
X=\left[
\begin{array}{ccccc}
\mathcal{C}({\mathbf{x}_1})     &\mathcal{C}({\mathbf{x}_2}) &\mathcal{C}({\mathbf{x}_3}) &\cdots &\mathcal{C}({\mathbf{x}_n}) \\
\mathcal{C}({\mathbf{x}_n})     &\mathcal{C}({\mathbf{x}_1}) &\mathcal{C}({\mathbf{x}_2}) &\cdots &\mathcal{C}({\mathbf{x}_{n-1}}) \\
\vdots  &\vdots &\vdots &\ddots &\vdots \\
\mathcal{C}({\mathbf{x}_2})     &\mathcal{C}({\mathbf{x}_3}) &\mathcal{C}({\mathbf{x}_4}) &\cdots &\mathcal{C}({\mathbf{x}_1}) \\
\end{array}
\right],
\end{eqnarray}
corresponding to the two-dimensional-image version of Eq.~\eqref{Eq:CirMatrX}. It can be decomposed as
\begin{eqnarray}
X=\sum_{j=1}^n V_j\otimes\mathcal{C}(\mathbf{x}_j)=\sum_{j=1}^n\sum_{k=1}^m \mathbf{x}_{jk}V_j\otimes V_k^{'}.
\end{eqnarray}
where $V_k^{'}=\sum_{l=0}^{m-1} \ket{(l-k+1)\mod m}\bra{l}$ and can also be efficiently implemented in time $O(\log (m))$.

Given the quantum oracle $O_{\mathbf{x}}$ such that $O_\mathbf{x}\ket{0}^{\otimes \lceil \log n m\rceil}=\sum_{i=1}^n\sum_{j=1}^m \sqrt{\mathbf{x}_{jk}}\ket{j}\ket{k}$ and under the assumption that $\sum_{jk}\mathbf{x}_{jk}=1$, the extended circulant Hamitonian $\tilde{X}=\ket{0}\bra{1}\otimes X+\ket{1}\bra{0}\otimes X^{\dag}$ can be efficiently simulated as shown in Theorem \ref{Theorem:ExCirSimulation} by extending the dimension $n$ to $nm$. The ability of efficiently simulating $\tilde{X}$ allows us to perform training and detection in the quantum computer as shown in the previous two subsections, except that the dimension has been extended from $n$ to $nm$.

\section{Applications}
\label{Sec:Applications}
In this section, we show how the output response state $\ket{\hat{\mathbf{y}}}$ of our quantum algorithm can be fully used to efficiently implement two important tasks related to VT: (1) object disappearance detection and (2) motion behavior matching.

\subsection{Object disappearance detection}

The task of \textit{object disappearance detection} is to detect whether the object has already disappeared or not, in the candidate image patch of the detection frame. If not, as discussed in the previous section, the set of elements in $\hat{\mathbf{y}}$ (Eq.~\eqref{Eq:DetectionPred}, namely the output in the detection phase of HCMB15 framework, will be approximately equal to that in $\mathbf{y}$ \cite{PC}. This means that in this case the elements in $\hat{\mathbf{y}}$ also approximately follow a Gaussian distribution with a peak. However, if the object has disappeared from the video frame, the distribution of the elements in $\hat{\mathbf{y}}$ becomes much more uniform \cite{PC}. This implies that in this case the quantum state $\ket{\hat{\mathbf{y}}}$ is closer to the uniform superposition state $\ket{\mathbf{1}}:=\sum_{j=1}^{n-1}\ket{j}/\sqrt{n}$. In order to clearly discriminate these two cases, we estimate the overlap between $\ket{\hat{\mathbf{y}}}$ and $\ket{\mathbf{1}}$, $P_1=\abs{\langle \hat{\mathbf{y}} \ket{\mathbf{1}}}^2$, using \textit{swap test} \cite{SSP16,QSVM14,BCWW01}, to determine and quantify the closeness between these two states. The quantum circuit of the canonical swap test for estimating the overlap of two quantum states is shown in Fig.~\ref{fig:SwapTest}, the circuit depth of which scales linearly with the qubit number of each state. The swap test has been implemented in both quantum optics \cite{GC13,Patel16,Ferreyrol13} and a gate-based quantum computer \cite{Linke17}. The task of estimating the state overlap can also be done by recent machine-learning-found quantum algorithms \cite{Cincio18}, which have shorter and even constant depth and thus may be more promising on near-term quantum computers than the canonical swap test. These algorithms experimentally exhibit more robustness and reliability, but their performances substantially depend on the connectivity and available quantum gates of the actual quantum computers.

\begin{figure}[htb]
\includegraphics[scale=0.5]{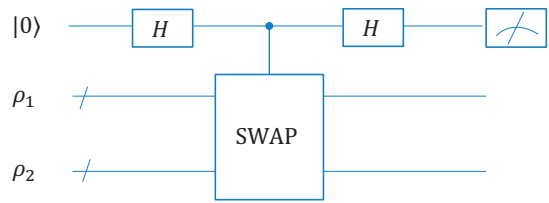}
\caption{\label{fig:SwapTest} Quantum circuit of swap test for estimating the overlap, $\textmd{\textmd{Tr}}(\rho_1\rho_2)$, between two states with density matrices $\rho_1$ and $\rho_2$, where SWAP denotes the SWAP unitary operation implemented by a series of elementary one-qubit or two-qubit gates \cite{QCQI10}. The probability of obtaining the measurement outcome $\ket{0}$ after measuring the top ancilla qubit is $\frac{1+\textmd{Tr}(\rho_1\rho_2)}{2}$, which reveals the estimate of $\textmd{Tr}(\rho_1\rho_2)$ by repeating the circuit for a sufficient number of times. In our case, we are to estimate the overlap between the two pure states $\rho_1=\ketbra{\hat{\mathbf{y}}}{\hat{\mathbf{y}}}$ and  $\rho_2=\ketbra{\mathbf{1}}{\mathbf{1}}$, $\textmd{Tr}(\rho_1\rho_2)=P_1=\abs{\langle \hat{\mathbf{y}} \ket{\mathbf{1}}}^2$.}
\end{figure}

By setting a threshold $\vartheta_1$, the object is considered as disappeared if the obtained $P_1\geq \vartheta_1$; otherwise, the object is considered still in the video frame.
Following \cite{SSP16,QSVM14,BCWW01}, $P_1$ can be obtained to accuracy $\delta$ with $O(1/\delta^2)$ repetitions of swap test, so this task only takes $O(1/\delta^2)$ copies of the response state $\ket{\hat{\mathbf{y}}}$ (as well as $\ket{\mathbf{1}}$), as opposed to $O(\sqrt{n})$ copies for sampling described in the last section. This means that, compared with the task of sampling $\ket{\hat{\mathbf{y}}}$, the task of object disappearance detection takes exponentially fewer number of $\ket{\hat{\mathbf{y}}}$, if $\delta=O(1/\polylog{n})$ is acceptable.

To show more intuitively  how different the values of $P_1$ are in the above two cases, we carried out a numerical experiment with an illustrative example, as shown in Fig.~\ref{fig:ObjectDisappearance}. In the experiment, we manually and randomly generate one training frame, and two detection frames in which the object is either still there or disappeared. In this example, $P_1=0.577$ for the case where the object exists and $P_1=0.986$ for the case where the object disappears. We then run the experiment for 50 times to yield 50 values of $P_1$ for each case. These $P_1$ values are shown in Fig.~\ref{fig:P1Comparison}. From this figure, we can see that $P_1\le 0.6$ for the case where the object exists, while $P_1 \ge 0.9$ for the case where the object disappears. Therefore, for this example, $\vartheta_1$ can satisfactorally take the value $0.75$.

\begin{figure}[htb]
\includegraphics[scale=0.4]{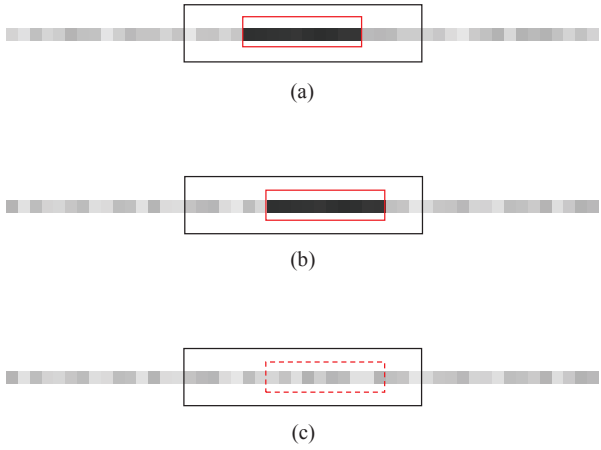}
\caption{\label{fig:ObjectDisappearance} (a) is the training frame, where the object is placed at the center. The black (bigger) rectangle denotes the image patch, and the red (smaller) rectangle denotes the object. (b) is the detection frame where the object moves 3 pixels to the right relative to that in (a) but still exists in the patch. (c) is the detection frame where the object disappears relative to that in (b), and the dashed rectangle denotes the position where the object should be. All the three frames are one-dimensional grayscale images of 50 pixels. The patch and the object are of 20 pixels and 10 pixels respectively.}
\end{figure}

\begin{figure}[htb]
\includegraphics[scale=0.58]{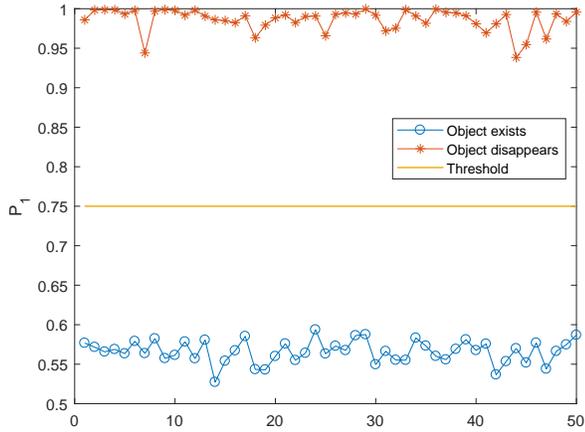}
\caption{\label{fig:P1Comparison} Comparison between the 50 values of $P_1$ for the case where the object exists and those for the case where the object disappears after running the experiment in Fig.~\ref{fig:ObjectDisappearance} for 50 times.}
\end{figure}


\subsection{Motion behavior matching}
The task of \textit{motion behavior matching} is to determine whether the motion behavior of the object in a video matches a given motion behavior template. More specifically, it is to determine whether the object in the video moves along a given path or not. To see this task visually, we present a simple example with two-dimensional images in Fig.~\ref{fig:MBM}. The procedure to implement this task using our quantum algorithm is detailed in the following steps.

\begin{figure}[htb]
\includegraphics[scale=0.37]{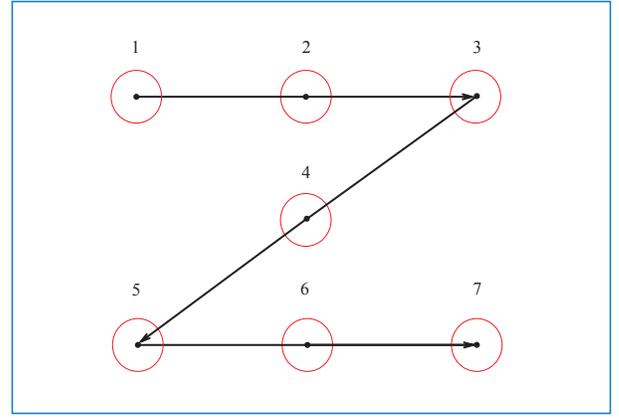}
\caption{\label{fig:MBM} A simple example for motion behavior matching. Here the rectangle with blue border denotes the frame, the red circles denote the object, the black arrowed lines correspond to the $Z$-shape motion template, and the numbers $1,2,\cdots,7$ are to mark seven selected positions in the template. The initial frame of the video contains the object located in position 1. If the object motion behavior in the video matches the template, the object in seven appropriately selected frames of the video should be located at these seven positions.}
\end{figure}

1. Train the entire initial frame of the video (instead of an image patch according to the standard HCMB15 framework) by using the training part of our quantum algorithm and get the classifier $\ket{\mathbf{w}}$.

2. Select $K$ positions in the template including the position of the object in the initial frame, generate $K$ ``template" frames by just moving the object in the initial frame to these positions, and perform the detection part of our quantum algorithm, using the $\ket{\mathbf{w}}$ obtained on these template frames, to generate $K$ response states, $\ket{\hat{\mathbf{y}}_t^1},\ket{\hat{\mathbf{y}}_t^2},\cdots,\ket{\hat{\mathbf{y}}_t^K}$.

3. Select $K$ ``actual" frames appropriately from the video, where the object is supposed to be located at the selective $K$ positions if the object motion behavior matches the template well, and perform the detection part of our quantum algorithm with the classifier $\ket{\mathbf{w}}$ on these frames, to generate $K$ response states, $\ket{\hat{\mathbf{y}}_a^1},\ket{\hat{\mathbf{y}}_a^2},\cdots,\ket{\hat{\mathbf{y}}_a^K}$. Here we assume that the information of the object moving speed is known so that these actual frames can be appropriately selected.

4. Evaluate the closeness between $\ket{\psi_t}$ and $\ket{\psi_a}$ by estimating the value of $P_2=\abs{\langle \psi_t \ket{\psi_a}}^2$ using swap test as shown in Fig.~\ref{fig:SwapTest}, where $\ket{\psi_t}=\otimes_{k=1}^K \ket{\hat{\mathbf{y}}_t^{k}}$ and $\ket{\psi_a}=\otimes_{k=1}^K \ket{\hat{\mathbf{y}}_a^{k}}$. Just as $P_1$, $P_2$ can also be estimated by shorter-depth quantum algorithms \cite{Cincio18}. Setting another threshold $\vartheta_2$, if $P_2\geq \vartheta_2$, we regard the motion behavior of the object in the video matches the template well; otherwise, the matching fails.

In step 4, $P_2$ can be obtained to accuracy $\delta$ with $O(1/\delta^2)$ copies of $\ket{\psi_t}$ and $\ket{\psi_a}$ and each copy takes $2K$ response states of same size. Therefore, $O(K/\delta^2)$ copies of response states ($\ket{\psi_t}$ and $\ket{\psi_a}$) are taken in total for implementing this task. Moreover, $O(K/\delta^2)=O(\polylog(n))$ if $K,1/\delta=O(\polylog(n))$ is acceptable. In practice, $K$ can be chosen to be very small relative to the number of frames in the video, if we just want to roughly know the object's  motion in the video. The choice for the threshold $\vartheta_2$ depends on how well we require the object's moving motion to match the template, but practically and reasonably $\vartheta_2$ should be chosen to be close to 1, e.g., $\vartheta_2=0.9$.  In addition, the accuracy $\delta$ is generally chosen to be $O(\vartheta_2)$, e.g., $\delta=\vartheta_2/10$. Therefore, the scaling $O(K/\delta^2)$ can be small in practice.

In addition to the above two applications, we expect that this algorithm can be implemented for other tasks of practical interest.

\section{Conclusions}
\label{Sec:Conculsions}

We have presented a quantum algorithm for visual tracking based on the well-known classical HCMB15 framework, which can be applied to detect the object position as done in the HCMB15 framework with quadratic speedup. Our algorithm firstly trains a quantum-state ridge regression classifier, where the optimal fitting parameters of ridge regression are encoded in the amplitudes. The classifier is then performed on the detection frame to generate a quantum state whose amplitudes encode the responses of all the candidate image patches. Taking full advantage of efficient extended circulant Hamiltonian simulation, both states can be generated in time polylogarithmic in their dimensionality, when the data matrices have low condition numbers. This demonstrates that our quantum algorithm has the potential to achieve exponential speedup over the classical counterpart. Furthermore, we have also shown how our algorithm can be applied to efficiently implement two important tasks: object disappearance detection and motion behavior matching.

We expect the techniques used in our algorithm, such as extended circulant Hamiltonian simulation, to be helpful in designing more quantum algorithms requiring manipulating circulant matrices. We also hope this algorithm may inspire more quantum algorithms for visual tracking as well as other computer vision problems.

\section*{Acknowledgements}
The authors would like to thank J. F. Henriques for helpful discussions and Bruce Hartley for proof-reading the manuscript. This work is supported by NSFC (Grant Nos. 61572081, 61672110, and 61671082). C.-H. Yu is supported by China Scholarship Council.

\appendix*
\section{Preparing the quantum state $\ket{\mathbf{y}}$ in the training phase}
\label{Appendix:preparey}
According to the definition of $\mathbf{y}$ (Eq.(\ref{Eq:TrainOutput}), for one-dimensional images, it is easy to see that $y_i=e^{-(i-1)^2/s^2}$ for $i=1,2,\cdots,\lfloor (n+1)/2 \rfloor$, and $y_i=e^{-(n+1-i)^2/s^2}$ for $i=\lfloor (n+1)/2 \rfloor+1,\lfloor (n+1)/2 \rfloor+2,\cdots, n$. Since the elements of $\mathbf{y}$ are generally not uniformly distributed, it is time consuming to create the quantum state $\ket{\mathbf{y}}=\sum_{i=1}^n y_i \ket{i}/\norm{\mathbf{y}}$ in the common way: $y_i$ are loaded into a quantum register in parallel and a controlled rotation and measurement are then conducted on an ancilla qubit so that $y_i/\norm{\mathbf{y}}$ can be written in the amplitudes. Another efficient way is referred to \cite{GR}, where $\sum_{i=i_1}^{i_2} y_i^2$ for any two $i_1$ and $i_2$, with $1\leq i_1\leq i_2 \leq n$, is required to be efficiently computable to create $\ket{\mathbf{y}}$ \cite{HHL09}. However, $y_i$ cannot satisfy this condition because there is no efficient formula for calculating $\sum_{i=i_1}^{i_2} y_i^2$ for any two $i_1$ and $i_2$ with $1\leq i_1\leq i_2 \leq n$. In the following, we present a new approach that combines both of the two ways to efficiently create $\ket{\mathbf{y}}$.

Our new approach is based on the observation that for $i=2,\cdots,\lfloor (n+1)/2 \rfloor$, $y_i$ can be approximated by integrating some Gauss function in some appropriate range, that is,
\begin{eqnarray}
y_i^2\approx \tilde{y}_i^2:=s\int_{\frac{i-2}{s}}^{\frac{i-1}{s}} e^{-2t^2} dt,
\end{eqnarray}
but $y_i^2 \leq \tilde{y}_i^2$. This is related to the error function $E(x)=\frac{2}{\sqrt{\pi}}\int_0^x e^{-t^2}dt$ ($x>0$) that can be approximated by some elementary functions. For example, $E(x)\approx G(x):=1-(a_1t+a_2t^2+a_3t^3)e^{-x^2}$ within error $2.5\times 10^{-5}$ \cite{AS}, where $t=\frac{1}{1+px}$, $p=0.47047$, $a_1=0.3480242$, $a_2=-0.0958798$, and $a_3=0.7478556$. Since this error is very small and thus negligible, we replace $E(x)$ with $G(x)$ and have
$$\tilde{y}_i^2=\frac{\sqrt{\pi}}{2\sqrt{2}}s(G(\frac{\sqrt{2}(i-1)}{s})-G(\frac{\sqrt{2}(i-2)}{s}))$$
for $i=2,\cdots,\lfloor (n+1)/2 \rfloor$. This means that
$\sum_{i=i_1}^{i_2} \tilde{y}_i^2$ for any two $2\leq i_1\leq i_2 \leq \lfloor (n+1)/2 \rfloor$ is efficiently computable. Furthermore, for $i=2,\cdots,\lfloor (n+1)/2 \rfloor$, since $y_1=1$ and $y_i=y_{n+2-i}$, we can set $\tilde{y}_1^2=1$ and $\tilde{y}_i^2=\tilde{y}_{n+2-i}^2$, thus $\sum_{i=i_1}^{i_2} \tilde{y}_i^2$ for any two $1\leq i_1\leq i_2 \leq n$ is also efficiently computable. Consequently, using the approach of \cite{GR}, we can create the state
\begin{eqnarray}
\label{Eq:stateytilde}
\ket{\tilde{\mathbf{y}}}:=\sum_{i=1}^n \frac{\tilde{y}_i}{\sqrt{\sum_{i=1}^n \tilde{y}_i^2}}\ket{i}
\end{eqnarray}
efficiently in time $O(\log n)$ \cite{GR}.

Armed with the capability of efficiently creating $\ket{\tilde{\mathbf{y}}}$, we can create the state $\ket{\mathbf{y}}$ efficiently as well, by the following procedure.

1. Add two registers to the register of $\ket{\tilde{\mathbf{y}}}$ and load the $\tilde{y}_i$ and $y_i$ in parallel to get the state
\begin{eqnarray}
\sum_{i=1}^n \frac{\tilde{y}_i}{\sqrt{\sum_{i=1}^n \tilde{y}_i^2}}\ket{i}\ket{y_i}\ket{\tilde{y}_i}.\nonumber
\end{eqnarray}
Note that $\tilde{y}_i$ and $y_i$ can be computed efficiently.

2. Add another qubit and perform the controlled rotation to have the state
\begin{eqnarray}
\sum_{i=1}^n \frac{\tilde{y}_i}{\sqrt{\sum_{i=1}^n \tilde{y}_i^2}}\ket{i}\ket{y_i}\ket{\tilde{y}_i}\left(\frac{y_i}{\tilde{y}_i}\ket{1}+\sqrt{1-\frac{y_i^2}{\tilde{y}_i^2}}\ket{0}\right).\nonumber
\end{eqnarray}

3. Undo the first step and get the state
\begin{eqnarray}
\sum_{i=1}^n \frac{\tilde{y}_i}{\sqrt{\sum_{i=1}^n \tilde{y}_i^2}}\ket{i}\left(\frac{y_i}{\tilde{y}_i}\ket{1}+\sqrt{1-\frac{y_i^2}{\tilde{y}_i^2}}\ket{0}\right).\nonumber
\end{eqnarray}

4. Measure the last qubit to ensure the outcome of $\ket{1}$ and the first register will be in the state $\ket{\mathbf{y}}$ as desired.  The success probability of this post selection is $$ \frac{\sum_{i=1}^n y_i^2}{\sum_{i=1}^n \tilde{y}_i^2}=\Theta(1),$$ because  $$\sum_{i=1}^n \tilde{y}_i^2 \approx \sum_{i=1}^n y_i^2 \mbox{ ~ and ~ } \sum_{i=1}^n y_i^2 \approx \frac{\sqrt{\pi}}{2\sqrt{2}}s$$ when $n$ is sufficiently large.  Consequently the time complexity for creating $\ket{\mathbf{y}}$  scales as $O(\log n)$.

\end{document}